\documentclass[12pt,final]{article}



\usepackage{color}
\usepackage{amsmath,amsthm,amsfonts,amsthm,amssymb}
\usepackage{mathrsfs} 
\usepackage{fullpage}
\usepackage{graphicx}
\usepackage{float}

\newtheorem{lemma}{Lemma}
\newtheorem{remark}{Remark}
\newtheorem{proposition}{Proposition}
\newtheorem{theorem}{Theorem}
\newtheorem{definition}{Definition}

\renewcommand\nu n

\newcommand{\N}{\mathbb N}








\definecolor{gray}{rgb}{0.9,0.9,0.9}




\usepackage{amssymb,amsmath,mathrsfs,geometry,
pdfsync}
\usepackage{graphicx}
\newcommand{\ignore}[1]{{}}

\usepackage{xcolor}
\usepackage[normalem]{ulem}
\usepackage{cancel}

\definecolor{brown}{rgb}{0.6,0.2,0.2}

\usepackage{hyperref}
\usepackage{esint}

\newlength{\overwritelength}
\newlength{\minimumoverwritelength}
\setlength{\minimumoverwritelength}{1cm}
\newcommand{\overwrite}[3][red]{%
  \settowidth{\overwritelength}{$#2$}%
  \ifdim\overwritelength<\minimumoverwritelength%
    \setlength{\overwritelength}{\minimumoverwritelength}\fi%
  \stackrel
    {%
      \begin{minipage}{\overwritelength}%
        \color{#1}\centering\small #3\\%
        \rule{1pt}{9pt}%
      \end{minipage}}
    {\colorbox{#1!50}{\color{black}$\displaystyle#2$}}}

\usepackage{color}

\renewcommand{\ignore}[1]{}

\makeatletter
\renewcommand*{\@fnsymbol}[1]{\ifcase#1\or*\else\@arabic{\numexpr#1-1\relax}\fi}
\makeatother

\begin{document}

\title{DISSIPATIVE SCALE EFFECTS IN STRAIN-GRADIENT PLASTICITY: \\ THE CASE OF SIMPLE SHEAR\thanks{This work was partially supported by Sapienza Award 2013 Project ``Multiscale PDEs in Applied Mathematics'' (C26H13ZKSJ) and by INdAM-GNFM through the initiative ``Progetto Giovani''.}}

\author{Maria Chiricotto\thanks{Institute for Applied Mathematics, University of Heidelberg, Im Neuenheimer Feld 294, 69120 Heidelberg, Germany -
chiricotto@math.uni-heidelberg.de} \and Lorenzo Giacomelli\thanks{SBAI Department, Sapienza University of Rome, Via Scarpa 16, 00161 Roma, Italy - lorenzo.giacomelli@sbai.uniroma1.it} \and Giuseppe Tomassetti\thanks{Department of Civil Engineering and Computer Science Engineering, University of Rome ``Tor Vergata'', Via Politecnico 1, 00133 Roma - tomassetti@ing.uniroma2.it}}

\maketitle

\begin{abstract}
We analyze dissipative scale effects within a one-dimensional theory, developed in [L. Anand et al. (2005) J. Mech. Phys. Solids 53], which describes plastic flow in a thin strip undergoing simple shear. We give a variational characterization of the {\emph{ yield (shear) stress}} --- the threshold for the inset of plastic flow --- and we use this characterization, together with results from [M. Amar et al. (2011) J. Math. Anal. Appl. 397], to obtain an explicit relation between the yield stress and the height of the strip. The relation we obtain confirms that thinner specimens are stronger, in the sense that they display higher yield stress.
\end{abstract}

\textbf{Keywords.} strain--gradient plasticity, rate-independent evolution, energetic formulation, dissipative length scale, size effects, size-dependent strengthening

\textbf{AMS subject classification. } 74C05, 35K86, 49J40


\section{Introduction}

A number of experiments have shown that conventional plasticity fails to capture the size-dependent behavior of metallic specimens undergoing plastic flow in the size range below 100 microns, with smaller samples being, in general, stronger (see \cite{Hutch2000IJSS} for a review).

\smallskip

Substantial theoretical work has been carried out to extend conventional plasticity to the micron scale. It is acknowledged that size effects observed in metallic samples are associated to the inhomogeneity of plastic flow \cite{ashby1970deformation}, a fact that motivates a number of \emph{strain-gradient plasticity} theories, starting with Ref.~\cite{DilloK1970IJSS}.

\smallskip

In the so-called {\emph{non-local} or } \emph{high-order theories}, the flow rule that governs the evolution of plastic strain is a partial differential equation which requires the specification of appropriate boundary conditions. The first of such theories was proposed by Aifantis \cite{Aifan1984JoEMaT}; the vast majority of subsequent high-order theories were derived using the virtual-power principle, by taking into account power expenditure by higher-order stresses that are work-conjugate to the plastic-strain gradient \cite{Barde2006JMPS,
FleckH2001JMPS,Gudmu2004JMPS,Gurti2004JMPS,GurtiA2005JMPS}.

\smallskip

Apparently, the theories developed by Gurtin and Anand \cite{Gurti2004JMPS,GurtiA2005JMPS} are those that have inspired most mathematical work. One of its distinctive aspects is that the full plastic distortion (the sum of a symmetric plastic strain and a skew--symmetric plastic spin) is accounted for. In  particular, the issues of existence and uniqueness of solutions for strain-gradient plasticity with plastic spin, as considered in Ref.~\cite{Gurti2004JMPS}, has been addressed in Ref.~\cite{BertsDGT2011DCDS} in the case of two-dimensional setting of anti-plane shear, and in Refs.~\cite{ebobisse2010existence}, \cite{neff2008uniqueness}, and \cite{NeffCA2009MMMAS} in the full three-dimensional setting. The model for plastically-irrotational materials proposed in Ref.~\cite{GurtiA2005JMPS} was studied in Ref.~\cite{ReddyEM2008IJP}. Of particular importance for the present paper are the existence theorems for strain-gradient plasticity based on the notion of energetic solution, which have been proved both in the small--strain \cite{GiacoL2008SJMA} and in the large--strain \cite{MainiM2009JNS} setting.

\smallskip

The flow rules proposed in Ref.~\cite{Gurti2004JMPS,GurtiA2005JMPS} are of particular interest because they incorporate two length scales:

\smallskip

\begin{itemize}
\item an energetic scale $L$, which appears from letting the free-energy density depend on derivatives of the \emph{plastic strain}, $\mathbf E^{\rm p}$, through the \emph{Burgers tensor}, $\mathbf G={\rm curl}\mathbf E^{\rm p}$;

\smallskip

\item a dissipative scale $\ell$, which arises from letting the gradient of plastic strain rate, $\nabla\dot{\mathbf E^{\rm p}}${,} enter the dissipation-rate density.
\end{itemize}

\smallskip

The form of the free energy density is motivated by dislocation mechanics. In particular, the choice of letting the free energy to depend on plastic strain gradient through the Burgers tensor follows from the presumption that the so--called geometrically-necessary dislocations (whose density is measured by $\mathbf G$) play a major role in determining size-dependent response, a presumption that finds its justification in homogenization results from discrete-dislocation models \cite{GarroLP2010JEMS,KratoS2008PRB}.

\smallskip

Because of the complicated nature of the non--local flow rule, it is not easy to understand how its solutions are affected by the material scales. On the other hand, such understanding is crucial in order to identify these scales by comparison with experiments. Thus, parallel with the literature dealing with modeling, researchers have also endeavored to investigate how the various scales may affect the nature of solutions, not only for the Gurtin-Anand theory, but also for other strain-gradient plasticity theories.

\smallskip

This task is usually accomplished by working out a simple analytical problem that mimics some experimental setup. For example, scale dependence for the torsion experiment was investigated in Ref.~\cite{IdiarF2010MSMS}
(by numerical and asymptotic considerations) in the framework of the Fleck \& Willis theory \cite{FleckW2009JMPS} and in Ref.~\cite{ChiriGT2012SJAM} (by rigorous arguments) for
energetic scale effects within the Gurtin-Anand theory \cite{GurtiA2005JMPS}. Moreover, for the distortion--gradient plasticity (which accounts also for plastic spin), specific finite-element schemes for the torsion problem have been recently proposed in Ref.~\cite{BardeP2014Modelling}. Problems involving {micro-}bending have been scrutinized in Ref.~\cite{IdiarDFW2009IJOES} and, more recently, in Ref.~\cite{FleckHW2014PRSAP} in the case of non-monotone loading.

\smallskip

With a similar goal in mind, a simplified flow rule, formulated in one spatial dimension, was derived and analyzed in Ref.~\cite{AnandGLG2005JMPS} to investigate the effects of both the energetic and the dissipative scales. Such flow rule, which mimics the traction problem in simple shear symmetry, will be introduced in Section \ref{S-model}. In the same section we also make a comparison with conventional plasticity. From our comparison two facts emerge: 1) that the length-scale $\ell$ is expected to be a source of additional strengthening; 2) that the natural way to quantify strengthening is to consider increase of the {\em Yield stress}, $\tau_Y$, i.e., the value of the (shear) stress that triggers plastic flow in an initially virgin sample.

\smallskip

The aim of this paper is to rigorously prove that, according to the flow rule we consider, $\tau_Y$ is strictly increasing with $\ell$, that is to say, \emph{smaller samples are stronger}. In fact, within this simplified framework of simple shear symmetry, we will be able to determine the dependence of $\tau_Y$ on $\ell$ explicitly. Results and proof are stated (in renormalized variables) in Section \ref{S-result}, which also contains an outline of the arguments (details are given in Sections \ref{S-proof-1}-\ref{S-proof-2}). In summary, we will first argue that $\tau_Y$ may characterized as the smallest value that the \emph{renormalized dissipation}
$$
\frac{S_0}{h} \int_{-h}^{+h} \sqrt{\phi^2(y)+\ell^2 \phi_y^2(y)}{\rm d} y
$$
attains among all $\phi\in H^1_0((-h,+h))$ such that $\int_{-h}^{+h}\phi(y){\rm d} y=1$ (see Theorem \ref{bbvstb}). This constrained minimization problem had already been introduced in \cite{AnandGLG2005JMPS} and analyzed in \cite{AmarCGR2013JMAA}, showing that a minimum is attained in $BV$, which is smooth in the interior and satisfies the corresponding Euler-Lagrange equation. We will then argue that these results permit to explicitly characterize $\tau_Y$ in terms of $\ell$ (see Theorem \ref{T2} and Figure \ref{fig:3}).

\section{Problem setup}\label{S-model}

\subsection{The traction problem}

The  one dimensional theory developed in Ref.~\cite{AnandGLG2005JMPS} describes plastic flow in a body  having the shape of an infinite strip of width $2h$, namely, $\Omega_h=\big\{\mathbf x=(x,y,z)\in\mathbb R^3:-h<y<h\big\}$, as sketched in Fig.~1.
\begin{figure}[h]
\centering
\includegraphics[scale=0.3]{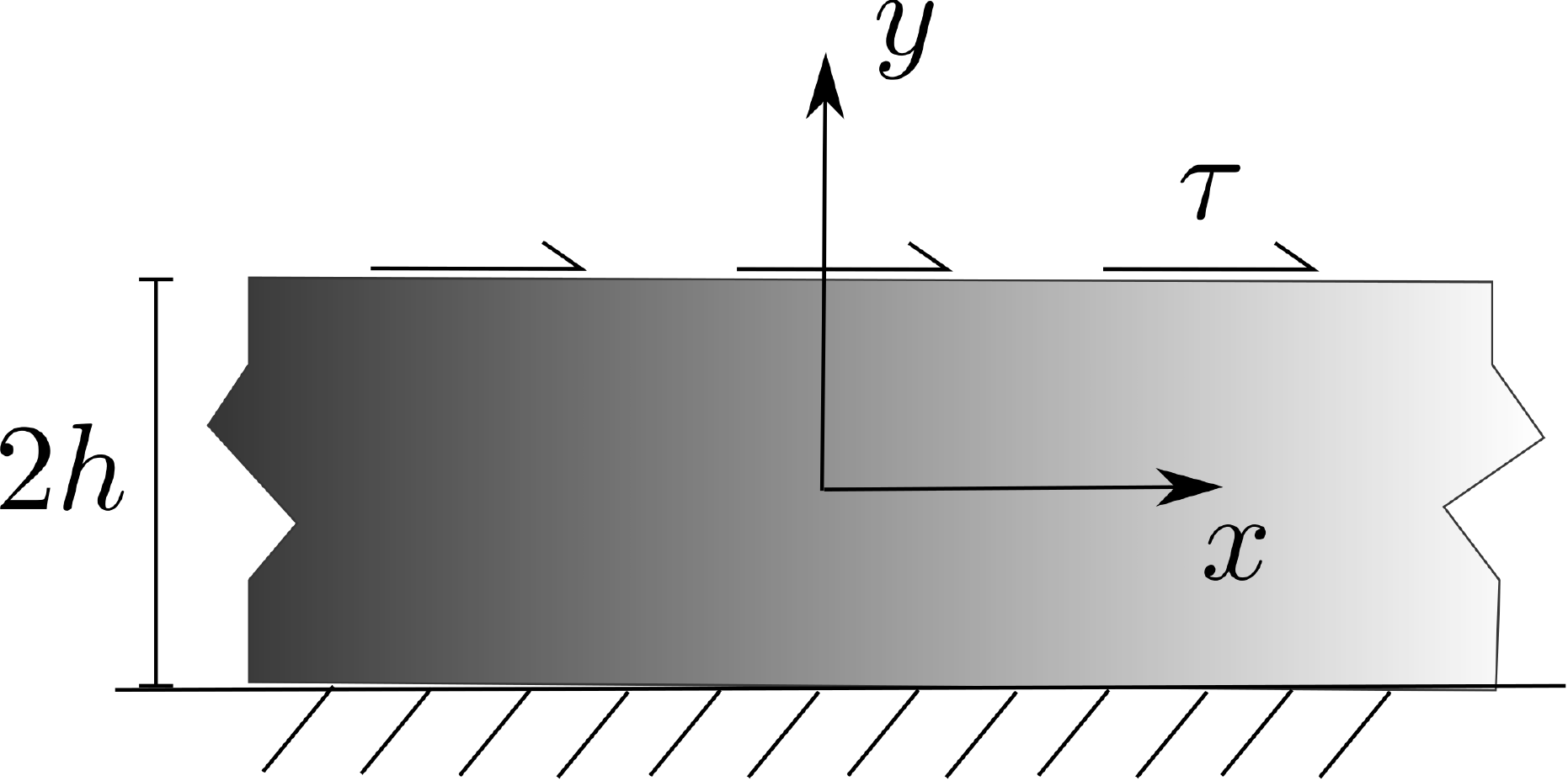}
\caption{An infinite strip of height $2h$, clamped on the bottom side and subject to a uniform shear traction $\tau$ on the top side.}
\end{figure}
We restrict attention to the so--called \emph{traction problem}, {describing} an ideal experiment in which the bottom side of the strip is clamped and a uniform \emph{shear stress} $\tau$ is prescribed on the upper side. We work in the \emph{rate-independent} setting of quasistatic evolution in plasticity and we limit our attention to the case of \emph{proportional loading}, that is to say, we assume that $\tau$ be strictly increasing with respect to time. With this assumption, we may label each instant by the corresponding value of the shear stress and adopt $\tau$ in place of time as the independent variable.

\smallskip

Because of translational invariance in the $x$- and $z$-directions, it is natural to look for solutions that enjoy the same invariance properties. Precisely, we assume that the two kinematic fields of interest, namely the \emph{displacement} $\mathbf u$ and the \emph{plastic strain} $\mathbf E^{\rm p}$, have the following representation:
\begin{equation*}
\begin{aligned}
&\mathbf u=u(y,\tau)\mathbf e_1,\quad \mathbf E^{\rm p}=\gamma^{\rm p}(y,\tau)\textrm{sym}(\mathbf e_1\otimes\mathbf e_2),
\end{aligned}
\end{equation*}
with $\{\mathbf e_i:i=1,2,3\}$ the canonical basis of $\mathbb R^3$. Consistent with this assumption, we take the stress tensor $\mathbf T$ to be spatially constant, and having the representation
$$
\mathbf T(\tau)=\tau(\mathbf e_1\otimes\mathbf e_2+\mathbf e_2\otimes\mathbf e_1).
$$

\subsection{A local flow rule: strengthening and hardening}\label{SS-local}

If the material is modeled in the framework of Mises plasticity with kinematic hardening, the flow rule governing the evolution of the \emph{shear strain} $\gamma^{\rm p}$ may be written as
\begin{equation}\label{eq:36}
\begin{cases}
 \tau-S_0\kappa\gamma^{\rm p}=\tau^{\rm dis},\\
 \frac{\tau^{\rm dis}}{S_0}\in
{\rm Sign}(\dot\gamma^{\rm p}),
\end{cases}
\end{equation}
where $S_0>0$ is the \emph{coarse-grain yield strength}, $\kappa$ is the \emph{kinematic-hardening coefficient}, a superimposed dot denotes differentiation with respect to the loading parameter $\tau$,
and
  \begin{equation*}
  {\rm Sign}(x)=\left\{\begin{array}{ll}
\{+1\} & \textrm{ if }x>0,
\\
{[-1,+1]} & \textrm{ if }x=0,
\\
\{-1\}  & \textrm{ if }x<0.
\end{array}\right.
\end{equation*}
Granted that the body is in its virgin state at the beginning of the experiment, namely,
\begin{equation}\label{virgin}
  \gamma^{\rm p}(y,0)=0,
\end{equation}
the solution of \eqref{eq:36} is easily worked out and, on introducing the positive-part operator $(\cdot){_+}{=\max\{\cdot, 0\}}$, can be written as
$$
\gamma^{\rm p}(y,\tau)=\frac{\left(\tau/{S_0}-1\right)_{_+}}{\kappa}.
$$
This solution displays the typical features of a stress-strain diagram from classical plasticity; in particular:

\smallskip

\begin{itemize}

\item the increase of $S_0$ is associated to \emph{strengthening}, that is, an increase of the threshold for the inset of plastic flow, the {\emph Yield shear stress}:
    \begin{equation}\label{def-Y}
\tau_Y=S_0;
    \end{equation}

\item the increase of $\kappa$, with $S_0$ fixed, is associated to \emph{hardening}, that is, an increase of the shear stress required to attain a given amount of plastic shear.
\end{itemize}

\smallskip

On multiplying \eqref{eq:36} by $\dot\gamma^{\rm p}$, we obtain the free energy balance
\begin{equation}\label{eq:42}
\frac 12 S_0\kappa\left({\frac{\rm d}{{\rm d}\tau}(\gamma^{\rm p})^2}\right)+{S_0|\dot\gamma^{\rm p}|}=\tau\dot\gamma^{\rm p},
\end{equation}
the \emph{free energy density} being given by $\frac {S_0}2\kappa(\gamma^{\rm p})^2$.
The balance \eqref{eq:42} can thus be interpreted as a splitting of the internal power $\tau\dot\gamma^{\rm p}$ expended on plastic flow into an \emph{energetic part} and a \emph{dissipative part}, $\tau^{\rm dis}\dot\gamma^{\rm p}=S_0|\dot\gamma^{\rm p}|$. Accordingly, we may say that, in the present context,

\smallskip

\begin{itemize}

\item strengthening is a \emph{dissipative effect}, whereas

\smallskip

\item hardening is an \emph{energetic effect}.

\end{itemize}

\subsection{A non-local flow rule: size-dependent strengthening and hardening}

Using the strain-gradient plasticity theory of Ref.~\cite{AnandGLG2005JMPS} we derive in Appendix 1 a non-local, rate-independent flow rule. In particular, we replace the first of \eqref{eq:36} with:
\begin{subequations}\label{eq:21}
\begin{align}
  \label{eq:21a}
  &\tau-S_0\left(\kappa \gamma^{\rm p}-L^2 \gamma^{\rm p}_{yy}\right)=\tau^{\rm dis}- k^{\rm dis}_y,
\end{align}
and the inclusion in \eqref{eq:36} with:
\begin{align}
  \label{eq:21b}
&\frac{(\tau^{\rm dis},{\ell^{-1}} k^{\rm dis})}{S_0}
\in
\textrm{Vers}\left(\dot\gamma^{\rm p},\ell\dot\gamma^{\rm p}_y\right),
\end{align}
\end{subequations}
where the index $y$ denotes partial differentiation with respect to $y$ and
\begin{equation*}
{\rm Vers}(\mathbf v) = \left\{\begin{array}{ll}
\left\{\frac{\mathbf v}{|\mathbf v|}\right\} & \quad\textrm{if}\quad|\mathbf v|\neq 0,\\
\{\mathbf v\in\mathbb R^2:\ |\mathbf v|\le 1{\}} & \quad\textrm{if}\quad |\mathbf v|=0.
\end{array}\right.
\end{equation*}
Problem \eqref{eq:21} must be complemented by both initial conditions, for which we again choose the virgin-state condition \eqref{virgin},
\begin{subequations}\label{eq:4}
\begin{equation} \label{eq:4a}
\gamma^{\rm p}|_{\tau=0}=0,
\end{equation}
and boundary conditions, for which we choose \emph{microscopic hard conditions}:
\begin{equation}\label{eq:4b}
\gamma^{\rm p}|_{y=-h}=\gamma^{\rm p}|_{y=+h}=0.
\end{equation}
\end{subequations}
As explained in Appendix 1, the partial differential equation \eqref{eq:21a} is {constitutively-augmented microforce balance} engendered by a version of the principle of virtual powers that accounts for power expenditure on the time derivative of the shear-strain gradient $\gamma^{\rm p}_y$. In particular,
taking the formal variation of the \emph{plastic free energy}
\begin{equation} \label{plastic-free-energy}
E^{\rm p}(\gamma^{\rm p})=\frac {S_0}2\int_{-h}^{+h}\left(\kappa(\gamma^{\rm p})^2+L^2\left(\gamma^{\rm p}_y\right)^2\right){\rm d} y
\end{equation}
and defining the \emph{plastic dissipation rate}
\begin{equation} \label{plastic-dissipation-rate}
\Psi^{\rm p}(\gamma^{\rm p})=S_0 \int_{-h}^{+h} \sqrt{(\dot{\gamma}^{\rm p})^2+\ell^2(\dot\gamma^{\rm p}_y)^2}{\rm d}y,
\end{equation}
the following identity is arrived at:
\begin{equation}\label{kl}
  \frac {\rm d}{{\rm d} \tau} E^{\rm p}(\gamma^{\rm p})+ \Psi^{\rm p}(\gamma^{\rm p}) = \int_{-h}^{+h}\tau\dot\gamma^{\rm p}{\rm d}y,
\end{equation}
which is again interpreted as a splitting of work expenditure (the right-hand side of \eqref{kl}) into an energetic part and a dissipative part. Given that $L$, resp. $\ell$, appear in the energetic, resp. dissipative part of the energy balance \eqref{kl}, in line with the discussion in \S \ref{SS-local}

\smallskip

\begin{itemize}
\item  one may expect that the extra energy brought into play by $L$ enhances hardening effects, and that the extra dissipation associated to $\ell$ is a source of additional strengthening.
\end{itemize}

\smallskip

The role of $L$ has been recently scrutinized in \cite{ChiriGT2012SJAM}, rigorously confirming this expectation in the case of torsion of thin wires. The role of $\ell$ has been investigated both formally and numerically in \cite{AnandGLG2005JMPS}. In view of the discussion in \S \ref{SS-local} (cf. in particular \eqref{def-Y}) a natural way to rigorously quantify the role of $\ell$ is to determine how the yield shear stress,
\begin{equation}\label{def-tau-Y}
  \tau_Y:=\sup\big\{\tau\ge 0:\gamma^{\rm p}(\cdot,\tau)=0\big\},
\end{equation}
depends on $\ell$. This constitute the goal of this paper.

\section{Main results}\label{S-result}
\subsection{Scaling}
In order to single out the relevant parameters, we introduce dimensionless independent variables:
\begin{equation*}
r:=\frac {y} h, \quad\theta:=\frac{\tau}{S_0}.
\end{equation*}
Consistent with this choice, we introduce the dimensionless parameters:
\begin{equation}\label{eq:9}
\lambda:=\frac {\ell} {h},\quad\Lambda:=\frac{L}{h}.
\end{equation}
The nonlocal flow rule \eqref{eq:21} can now be reformulated in the domain $I:=(-1,+1)$ and takes the form (henceforth, for typographical convenience, we drop the superscript $\rm p$ from $\gamma^{\rm p}$):
\begin{equation}
  \label{eq:34}
  \begin{cases}
  \displaystyle\theta-\kappa\gamma+\Lambda^2\gamma_{rr}=\bar\tau^{\rm dis}-\bar k^{\rm dis}_r,\\
\displaystyle (\bar\tau^{\rm dis},\bar k^{\rm dis})\in \partial \psi_\lambda\left(\dot\gamma,\dot\gamma_r\right),
  \end{cases}
\end{equation}
where the index $r$ denotes partial differentiation with respect to $r$. Initial and microscopically hard boundary conditions \eqref{eq:4} now read as
\begin{equation}\label{eq:2}
  \gamma(r,0)=\gamma(-1,\theta)=\gamma(+1,\theta)=0\quad(r,\theta)\in I\times[0,+\infty)
\end{equation}
and the \emph{renormalized plastic free energy}, resp. \emph{dissipation-rate}, are given by
\begin{equation} \label{eq:g1}
E(\gamma):=  \frac{\kappa}2\int_I\left(\gamma^2+\Lambda^2\gamma_r^2\right){\rm d} r, \quad   \Psi(\gamma):=\int_I\sqrt{\gamma^2+\lambda^2\gamma_r^2}{\rm d}r
\end{equation}
(cf. \eqref{plastic-free-energy}, resp. \eqref{plastic-dissipation-rate}). In renormalized varibales, our aim becomes that of rigorously quantifying the dependence on the {\emph renormalized dissipative scale}, $\lambda$, of the {\emph renormalized yield shear stress} (cf. \eqref{def-tau-Y})
\begin{equation}
  \label{eq:48}
 \frac{\tau_y}{S_0}= \theta_Y:=\sup\big\{\theta\ge 0:\gamma(\vartheta)=0\quad \forall \vartheta \in [0,\theta]\big\},
\end{equation}
namely, \emph{the largest value attained by the renormalized shear stress $\theta$ prior to the inset of plastic flow.}

\subsection{Energetic formulation}

We assume hereafter that $\kappa\ge 0$, $\Lambda>0$, and $\lambda>0.$ Being a \emph{rate-independent} dynamical system, the flow rule \eqref{eq:34}--\eqref{eq:2} can be formulated in many equivalent ways. The formulation that best suits our needs is the so--called \emph{energetic formulation} proposed in Ref.~\cite{MielkT2004NNDEA}. With a view towards formulating \eqref{eq:34}--\eqref{eq:2} in the energetic format, we introduce the (renormalized) \emph{energy functional}:
\begin{equation}
  \label{eq:20}
  \mathscr E(\theta,\gamma):=E(\gamma)-\theta\int_I \gamma{\rm d}r.
\end{equation}
As usual, we write $\gamma(\theta):=\gamma(\theta,\cdot)$. We can now give the definition of energetic solution.

\smallskip

\begin{definition}[Energetic solution]\label{def-sol}
 Given $\Theta>0$, a function $\gamma:[0,\Theta]\to H^1_0(I)$ is an \emph{energetic solution} to \eqref{eq:34}--\eqref{eq:2} if the function ${[0,\Theta]\ni }\theta\mapsto\frac{\partial\mathscr E}{\partial\theta}(\theta,\gamma(\theta))=-\int_I\gamma{\rm d}r$ is in $L^1((0,\Theta))$ and if the following conditions are satisfied for all $\theta\in[0,\Theta]$:
\begin{subequations}\label{eq:14}
    \begin{align}
      &\mathscr E(\theta,\gamma(\theta))\le \mathscr E(\theta,v)+\Psi(\gamma(\theta)-v) \quad\mbox{ for all $v\in H^1_0(I)$}, \label{eq:50}\\
      &\mathscr E(\theta,\gamma(\theta))+{\rm dis}_{\Psi}(\gamma;[0,\theta])=-\int_0^{\theta}\int_I \gamma(\vartheta){\rm d}r\,{\rm d}\vartheta,\label{eq:30}
    \end{align}
\end{subequations}
where ${\rm dis}_{\Psi}(\gamma;[0,\theta])$ is the total variation of $\gamma$ on $[0,\theta]$ with respect to the distance $d(\gamma_1,\gamma_2)=\Psi(\gamma_1-\gamma_2)$, i.e.,
$$
{\rm dis}_{\Psi}(\gamma;[0,\theta]):=\sup\left\{\sum_{j=1}^N \Psi(\gamma(\theta_j)-\gamma(\theta_{j-1})):\ N\in \N,\ 0= \theta_0<\dots <\theta_N=\theta \right\}.
$$
\end{definition}

\smallskip

In the present setting (quadratic energy) the next proposition is established without burden by invoking, for instance, Theorem 2.1 in Ref.~\cite{Mielk2005Evolution}:

\smallskip

\begin{proposition}
  There exists a unique solution $\gamma$ of \eqref{eq:34}--\eqref{eq:2}. Moreover, $\theta\mapsto\gamma(\theta)$ is Lipschitz continuous as a function from $[0,\Theta]$ to $H^1_0(I)$.
\end{proposition}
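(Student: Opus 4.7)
The plan is to recast problem \eqref{eq:34}--\eqref{eq:2} as a rate-independent system $(X, \mathscr E, \Psi)$ on the Hilbert space $X := H^1_0(I)$ and to verify the hypotheses of the abstract existence and uniqueness theorem \cite[Theorem 2.1]{Mielk2005Evolution}. Three features of the present setting make the verification essentially routine: (i) $\mathscr E(\theta,\cdot)$ is a continuous quadratic functional on $X$ whose $\theta$-dependence is only through the linear term $-\theta\int_I \gamma\, dr$; (ii) $\Psi$ is a continuous, symmetric, positively $1$-homogeneous convex functional on $X$; (iii) the initial datum $\gamma(0)=0$ trivially satisfies the global stability \eqref{eq:50} at $\theta=0$, since $\mathscr E(0,v)+\Psi(v)=E(v)+\Psi(v)\ge 0=\mathscr E(0,0)$ for every $v\in X$.

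I would then verify the remaining assumptions in turn. Uniform convexity and $\theta$-uniform coercivity of $\mathscr E(\theta,\cdot)$ follow from $\Lambda>0$: Poincar\'e's inequality on $H^1_0(I)$ yields $E(\gamma)\ge c_0\|\gamma\|_{H^1_0(I)}^2$ for some $c_0>0$, and the linear load term is absorbed via Young's inequality, giving coercivity uniform for $\theta$ in bounded intervals. Continuity of $\Psi$ on $X$ is clear from the pointwise bound $\Psi(\gamma)\le \int_I(|\gamma|+\lambda|\gamma_r|)\, dr\le C_\lambda\|\gamma\|_{H^1_0(I)}$; convexity, positive $1$-homogeneity, and the symmetry $\Psi(-v)=\Psi(v)$ are inherited from the pointwise integrand $\sqrt{s^2+\lambda^2 t^2}$. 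Finally, $\partial_\theta\mathscr E(\theta,\gamma)=-\int_I\gamma\, dr$ is linear in $\gamma$ and constant in $\theta$, hence automatically satisfies the Lipschitz-in-$\gamma$ and $L^1$-integrability-in-$\theta$ conditions required by the theorem.

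Once these hypotheses are in place, \cite[Theorem 2.1]{Mielk2005Evolution} supplies both existence and uniqueness of the energetic solution, and the Lipschitz continuity $[0,\Theta]\to H^1_0(I)$. The Lipschitz bound is quantitative: one tests the global stability \eqref{eq:50} at $\theta=\theta_1$ with $v=\gamma(\theta_2)$ and vice versa, combines the two inequalities with the energy balance \eqref{eq:30}, and invokes the uniform convexity of $E$ in the form $E(\gamma_1)+E(\gamma_2)-2E\bigl((\gamma_1+\gamma_2)/2\bigr)\ge c_0\|\gamma_1-\gamma_2\|_{H^1_0(I)}^2$; this yields $\|\gamma(\theta_1)-\gamma(\theta_2)\|_{H^1_0(I)}\le C|\theta_1-\theta_2|$ with $C$ depending only on $c_0$, $\kappa$, $\Lambda$, and $|I|$. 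I do not anticipate any real obstacle: the quadratic structure of $E$, the homogeneous boundary conditions that activate Poincar\'e's inequality, and the affine character of the only $\theta$-dependence in $\mathscr E$ all conspire to make the abstract hypotheses of Mielke's framework hold by direct inspection, so the statement reduces essentially to a citation.
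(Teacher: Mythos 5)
Your proposal is correct and follows exactly the paper's route: the paper itself gives no details, merely invoking Theorem 2.1 of \cite{Mielk2005Evolution}, and you verify the hypotheses of that abstract theorem (quadratic uniformly convex energy via Poincar\'e and $\Lambda>0$, continuous $1$-homogeneous convex dissipation, affine $\theta$-dependence) and sketch the standard two-sided stability/energy-balance argument for the Lipschitz bound. The only tiny caveat is that your coercivity estimate $E(\gamma)\ge c_0\|\gamma\|^2_{H^1_0}$ requires reading \eqref{eq:g1} as $\frac12\int_I(\kappa\gamma^2+\Lambda^2\gamma_r^2)$, consistent with the flow rule \eqref{eq:34}, rather than the literally printed $\frac{\kappa}{2}\int_I(\gamma^2+\Lambda^2\gamma_r^2)$, which would vanish when $\kappa=0$; this is evidently a typographical slip in the paper, not a gap in your argument.
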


\subsection{Characterizations of $\tau_Y$}\label{sec:actu-yield-strength}

The first main result of this paper is the following characterization of $\theta_Y$:

\smallskip

\begin{theorem}\label{bbvstb}
Let $\gamma$ be the unique energetic solution to \eqref{eq:34}-\eqref{eq:2} and let $\theta_Y$ as in \eqref{eq:48}. Then
\begin{equation}
  \label{eq:15}
  \theta_Y=\inf\left\{\Psi(\phi)
{: \ \phi}\in H^1_0(I),\ \int_I{\phi}{\rm d}r=1\right\}.
\end{equation}
\end{theorem}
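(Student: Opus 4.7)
The plan rests on the positive $1$-homogeneity and evenness of $\Psi$ combined with the stability condition \eqref{eq:50} evaluated at $\gamma(\theta)=0$. Write $\theta_*$ for the infimum on the right-hand side of \eqref{eq:15}. The preliminary ingredient is the universal lower bound
\begin{equation*}
\Psi(\phi)\ge \theta_*\,\Bigl|\int_I\phi\,\d r\Bigr|\qquad \text{for every } \phi\in H^1_0(I),
\end{equation*}
which one obtains by applying the definition of $\theta_*$ to $\phi/\!\int_I\phi$ (or its sign-flip) whenever $\int_I\phi\ne 0$, and is trivial otherwise.

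To show $\theta_Y\ge \theta_*$, I would verify that the constant evolution $\gamma\equiv 0$ is an energetic solution on every interval $[0,\theta]$ with $\theta<\theta_*$, and then invoke uniqueness (Proposition~1). The energy balance \eqref{eq:30} reduces to the trivial identity $0=0$; the stability \eqref{eq:50} reads $0\le E(v)-\theta\int_I v\,\d r+\Psi(v)$ for all $v\in H^1_0(I)$, and the universal bound gives $\Psi(v)-\theta\int_I v\,\d r\ge(\theta_*-\theta)|\int_I v\,\d r|\ge 0$, so the right-hand side dominates $E(v)\ge 0$. By uniqueness, $\gamma(\vartheta)=0$ on $[0,\theta]$, whence $\theta_Y\ge\theta$; letting $\theta\uparrow\theta_*$ yields $\theta_Y\ge\theta_*$.

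For the reverse inequality $\theta_Y\le\theta_*$, I would argue by contradiction. If $\theta_Y>\theta_*$, pick any $\theta\in(\theta_*,\theta_Y)$: by definition \eqref{eq:48}, $\gamma(\theta)=0$. Fix $\eta\in(0,\theta-\theta_*)$ and choose $\phi_\eta\in H^1_0(I)$ with $\int_I\phi_\eta\,\d r=1$ and $\Psi(\phi_\eta)<\theta_*+\eta$. Testing stability \eqref{eq:50} at $\theta$ against $v=s\phi_\eta$ with $s>0$, and using quadraticity of $E$ together with $1$-homogeneity of $\Psi$, gives
\begin{equation*}
0\le s^2 E(\phi_\eta)+s\bigl(\Psi(\phi_\eta)-\theta\bigr).
\end{equation*}
Dividing by $s>0$ and sending $s\to 0^+$ yields $\theta\le\Psi(\phi_\eta)<\theta_*+\eta<\theta$, a contradiction.

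The main (and essentially only) technical subtlety is that the infimum \eqref{eq:15} is in general not attained within $H^1_0(I)$---the relaxed minimization in $BV$ does admit a minimizer, by \cite{AmarCGR2013JMAA}---so the contradiction step cannot exploit an exact minimizer and must instead rely on $\eta$-approximate minimizers in $H^1_0(I)$, which however suffice by the very definition of the infimum.
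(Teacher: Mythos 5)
Your proof is correct and rests on the same core mechanism as the paper's: verify that $\gamma\equiv 0$ is stable for $\theta$ below the candidate threshold, invoke uniqueness, and for the converse exploit the positive $1$-homogeneity of $\Psi$ together with the quadratic scaling of $E$ via a dilation $v=s\phi$, $s\to 0^+$. The difference is purely organizational: the paper reaches \eqref{eq:15} in three stages, first introducing the stability indicator $m(\theta)=\inf_\phi\Phi_\theta(\phi)$ of \eqref{eq:16} and proving it is non-increasing (Lemma~\ref{lem:8}), then characterizing $\theta_Y$ through $m$ (Proposition~\ref{*vsb}), then replacing $m$ by the reduced indicator $\widetilde m$ of \eqref{eq:27} via homogeneity (Proposition~\ref{bvsbb}), and finally linking $\widetilde m$ to the constrained infimum. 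You collapse these three intermediate steps into a single pass by replacing the indicators with the universal bound $\Psi(\phi)\ge\theta_*\bigl|\int_I\phi\,\mathrm d r\bigr|$, which makes the monotonicity lemma unnecessary and the contradiction step self-contained (the $\eta$-approximate minimizers correctly sidestep the non-attainment of the infimum in $H^1_0(I)$). The content gained by the paper's route is modest but real: the stability indicators $m,\widetilde m$ are reusable objects that cleanly record when the trivial state loses stability, and they make explicit the inequality $\widetilde\Phi_\theta\le\Phi_\theta$ that drives the argument; your version buys brevity at the cost of that modularity.
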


\smallskip

In order to explain the relation between the two quantities, it is convenient to briefly illustrate the main steps in the proof, whose details are given in \S \ref{S-proof-1}.
We begin by observing that the energy--balance condition \eqref{eq:30} is identically satisfied for all $\theta\in(0,\theta_Y)$. Thus, what determines the inset of plastic flow is the loss of stability of the trivial state $\gamma\equiv 0$. This leads us to consider the \emph{stability indicator}:
\begin{equation}
  \label{eq:16}
 m(\theta):=\inf_{{\phi}\in H^1_0{(I)}}\Phi_\theta({\phi}), \quad\mbox{where}\quad
  \Phi_\theta({\phi}):=\mathscr E(\theta,{\phi})+\Psi({\phi}).
\end{equation}
We will indeed argue that
\begin{equation*}
 \theta_Y=\inf\left\{\theta\ge 0: \ m(\theta)<0\right\}
\end{equation*}
(cf. Proposition \ref{*vsb}). Next, we note that the plastic dissipation rate $\Psi$ is (positively) homogeneous of degree one in $\gamma$, whereas the plastic free energy $E$ is quadratic. Then, a simple scaling argument can be used to show that the \emph{reduced stability indicator}
\begin{equation}
  \label{eq:27}
  \widetilde m(\theta):=\inf_{{\phi}\in H^1_0{(I)}}\widetilde \Phi_{\theta}({\phi}), \quad\mbox{where}\quad \widetilde \Phi_{\theta}({\phi}):= \Psi(\phi)  -\theta\int_I {\phi}\,{\rm d}r
\end{equation}
is equivalent to the \emph{stability indicator}:
\begin{equation*}
  m(\tau)<0 \Leftrightarrow \widetilde m(\tau)<0
\end{equation*}
(cf. Proposition \ref{bvsbb}). The last step of our argument consists in observing that, again by homogeneity,  for negative values of $\widetilde m$ we can restrict our attention to the subspace of tests $\phi$ satisfying the normalization condition $\int\limits_I \phi{\rm d}r =1$: this leads to Theorem \ref{bbvstb}.

\subsection{The formula for $\tau_Y$}
The second main result of this paper is the following explicit formula for $\tau_Y$:

\smallskip

\begin{theorem}\label{T2}
The renormalized yield shear stress $\theta_Y=\displaystyle\frac{\tau_Y}{S_0}$ and the \emph{renormalized diassipative scale} $\lambda=\displaystyle\frac \ell h$ are related by
\begin{equation}\label{eq:65}
    \lambda =
\frac{2\sqrt{\theta_{Y}^2-1}}{\pi(\theta_{Y}
-\sqrt{\theta_{Y}^2-1})+2\theta_{Y}\arctan
\frac 1 {\sqrt{\theta_{Y}^2-1}}}.
\end{equation}
\end{theorem}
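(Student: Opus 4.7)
The plan is to combine the variational characterization of Theorem \ref{bbvstb} with an explicit computation of the minimizer for the right-hand side of \eqref{eq:15}. By the results of \cite{AmarCGR2013JMAA}, the infimum there is attained by an even function $\phi\in BV(I)$, smooth on the interior of its support, which satisfies the Euler-Lagrange equation
\begin{equation*}
\frac{\phi}{f}-\lambda^2\Bigl(\frac{\phi_r}{f}\Bigr)_r=\theta_Y,\qquad f:=\sqrt{\phi^2+\lambda^2\phi_r^2}.
\end{equation*}
The Lagrange multiplier coincides with the minimum value $\theta_Y$: by $1$-homogeneity of $\Psi$, testing this EL against $\delta\phi=\phi$ (Euler's relation) yields $\Psi(\phi)=\theta_Y\int_I\phi\,dr=\theta_Y$.

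Since the Lagrangian is autonomous in $r$, Beltrami's identity provides a first integral $\phi^2/f-\theta_Y\phi=C$, which, evaluated at the center $r=0$ (where $\phi_r=0$ and $\phi=M:=\max\phi$), gives $C=M(1-\theta_Y)$. Solving for $\phi_r$ yields an explicit ODE for $s:=\phi/M$ showing that, on the smooth branch, $s$ decreases monotonically from $1$ at the center to $(\theta_Y-1)/\theta_Y$ at the boundary of support with a vertical tangent; the $BV$-relaxation absorbs the ensuing drop to $0$ as an extra jump cost $\lambda M(\theta_Y-1)/\theta_Y$ per side. The substitution $\sin^2 t=\theta_Y s-(\theta_Y-1)$, $t\in[0,\pi/2]$, reduces both the half-support length and the mass integral to elementary trigonometric integrals, each proportional to $\lambda M$.

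Next, $0$-homogeneity of the EL equation in $\phi$ means that the profile shape depends only on $\theta_Y$, while $M$ is fixed by the mass normalization and the resulting cost is always equal to $\theta_Y$. Minimizing $\theta_Y$ is therefore equivalent to making the support as wide as possible, which forces the half-length $a$ to equal $1$; since $I_1(\theta_Y):=a/\lambda$ turns out to be monotonically decreasing in $\theta_Y$, this yields the single equation $\lambda I_1(\theta_Y)=1$.

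It remains to compute $I_1(\theta_Y)$ in closed form. Successive substitutions $u=\cos t$ and then $v=u\sqrt{(\theta_Y+1)/(2\theta_Y)}$ recast the integrand as $\bigl[1-2(\theta_Y-1)v^2/(\theta_Y+1-2v^2)\bigr]/\sqrt{1-v^2}$. The first summand integrates to $\arcsin$; for the second, setting $v=\sin\alpha$ produces the classical rational integral $\int d\alpha/(\theta_Y+\cos 2\alpha)$, whose primitive is $(1/\sqrt{\theta_Y^2-1})\arctan\bigl(\sqrt{(\theta_Y-1)/(\theta_Y+1)}\tan\alpha\bigr)$; at the upper endpoint the $\arctan$-argument simplifies to exactly $1$, contributing $\pi/4$. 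Combining these and using $\arcsin(1/\theta_Y)=\arctan(1/\sqrt{\theta_Y^2-1})$, the relation $\lambda I_1(\theta_Y)=1$ rearranges to \eqref{eq:65}. The main obstacle is this closed-form evaluation: the two-step substitution is essential to reveal the antiderivative, and the $\arctan\bigl(1/\sqrt{\theta_Y^2-1}\bigr)$ in the final formula can be traced precisely to $\int d\alpha/(\theta_Y+\cos 2\alpha)$.
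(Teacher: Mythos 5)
Your proof is correct and yields the stated formula, but it follows a computationally different path from the paper's. Both arguments start from Theorem \ref{bbvstb} and the Amar et al.\ characterization of the relaxed minimizer $\phi_Y$ (even, smooth and strictly decreasing on $[0,1)$, vertical tangent and jump at $r=\pm 1$) with Lagrange multiplier $\theta_Y$ in the Euler--Lagrange equation. From there the paper introduces the single auxiliary variable
$$\zeta=-\lambda\,\frac{\mathrm d\phi_Y/\mathrm dr}{\sqrt{\phi_Y^2+\lambda^2(\mathrm d\phi_Y/\mathrm dr)^2}},$$
observes that the E--L equation is literally $\lambda\zeta_r=\theta_Y-\sqrt{1-\zeta^2}$, and that $\zeta(0)=0$, $\zeta(1^-)=1$ by the Amar et al.\ boundary behaviour, so separation of variables gives at once $1/\lambda=\int_0^1\mathrm d\zeta/(\theta_Y-\sqrt{1-\zeta^2})$; one Weierstrass substitution then produces \eqref{eq:65}. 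You instead derive the Beltrami first integral $\phi^2/f-\theta_Y\phi=M(1-\theta_Y)$, solve for $\phi_r$ in terms of $s=\phi/M$, and reduce the half-length integral through the chain $\sin^2 t=\theta_Y s-(\theta_Y-1)$, $u=\cos t$, $v=u\sqrt{(\theta_Y+1)/(2\theta_Y)}$, $v=\sin\alpha$. I checked the intermediate identities and the final evaluation: they are all correct and do recover \eqref{eq:65}. The paper's choice of $\zeta$ simply short-circuits most of this algebra by building the first integral into the new variable.

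Two points worth tightening. First, you never show $\theta_Y>1$, which is needed for $\sqrt{\theta_Y^2-1}$ to be real, for the drop $(\theta_Y-1)/\theta_Y$ to be a genuine positive jump, and for the denominators appearing in your integrals never to vanish; the paper gets this in one line from $\theta_Y=\bar\Psi(\phi_Y)\ge\int_I\sqrt{\phi_Y^2+\lambda^2(\mathrm d\phi_Y/\mathrm dr)^2}\,\mathrm dr>\int_I|\phi_Y|\,\mathrm dr=1$, the strict inequality holding because $\mathrm d\phi_Y/\mathrm dr\not\equiv0$. Second, the passage ``minimizing $\theta_Y$ is therefore equivalent to making the support as wide as possible, which forces the half-length $a$ to equal $1$'' is an informal shooting heuristic and is not needed: the Amar et al.\ theorem already states that $\phi_Y$ is smooth, even and strictly decreasing on all of $[0,1)$, with the vertical tangent occurring exactly at $r\to1^-$, so the smooth branch spans the full half-interval and $a=1$ is immediate. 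Your claimed monotonicity of $I_1$ is true but then superfluous.
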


\smallskip

The proof is provided in Section \ref{S-proof-2} and relies on results from Ref.~\cite{AmarCGR2013JMAA}, guaranteeing that the relaxation in $BV(I)$ of the infimum problem in \eqref{eq:15} admits a minimizer $\phi_Y$ which is smooth in $I$ and satisfies the Euler-Lagrange equation
\begin{equation}
    \label{eq:53}
    \theta_Y=\frac{{\phi_Y}}{\sqrt{{\phi_Y}^2+\lambda^2\big(\frac{{\rm d}\phi_Y}{{\rm d} r}\big)^2}}-\lambda^2
\frac{{\rm d}}{{\rm d} r}\frac{\frac{{\rm d}\phi_Y}{{\rm d} r}}{\sqrt{{\phi_Y}^2+\lambda^2\big(\frac{{\rm d}\phi_Y}{{\rm d} r}\big)^2}}.
\end{equation}
%
%
By a suitable change of dependent variable, we converts \eqref{eq:53} into a \emph{first-order} differential equation with \emph{two} side conditions. The extra side condition selects the \emph{eigenvalue} $\theta_Y$ of the E-L equation \eqref{eq:53}, yielding \eqref{eq:53}.

\smallskip

The graph of $\tau_Y/S_0$, recovered from \eqref{eq:65}, is plotted in Fig. \ref{fig:3} (recall \eqref{eq:9} and \eqref{eq:48}). Our result confirms that as the sample becomes smaller{, i.e. $\lambda =\ell/h$ increases}, the actual yield strength increases{:} hence \emph{smaller samples are stronger}. Needless to say, the results from out plot agree with the numerical calculations carried out in Ref.~\cite{AnandGLG2005JMPS} and reported in Figure 4 thereof.
\begin{figure}[h]
  \centering
\def\svgwidth{0.7\linewidth}
\begingroup%
  \makeatletter%
  \providecommand\color[2][]{%
    \errmessage{(Inkscape) Color is used for the text in Inkscape, but the package 'color.sty' is not loaded}%
    \renewcommand\color[2][]{}%
  }%
  \providecommand\transparent[1]{%
    \errmessage{(Inkscape) Transparency is used (non-zero) for the text in Inkscape, but the package 'transparent.sty' is not loaded}%
    \renewcommand\transparent[1]{}%
  }%
  \providecommand\rotatebox[2]{#2}%
  \ifx\svgwidth\undefined%
    \setlength{\unitlength}{436.55445923bp}%
    \ifx\svgscale\undefined%
      \relax%
    \else%
      \setlength{\unitlength}{\unitlength * \real{\svgscale}}%
    \fi%
  \else%
    \setlength{\unitlength}{\svgwidth}%
  \fi%
  \global\let\svgwidth\undefined%
  \global\let\svgscale\undefined%
  \makeatother%
\null
\vskip 5em
\null
\begin{centering}
  \begin{picture}(0.7,0.41355894)%
    \put(0,0){\includegraphics[width=0.7\unitlength]{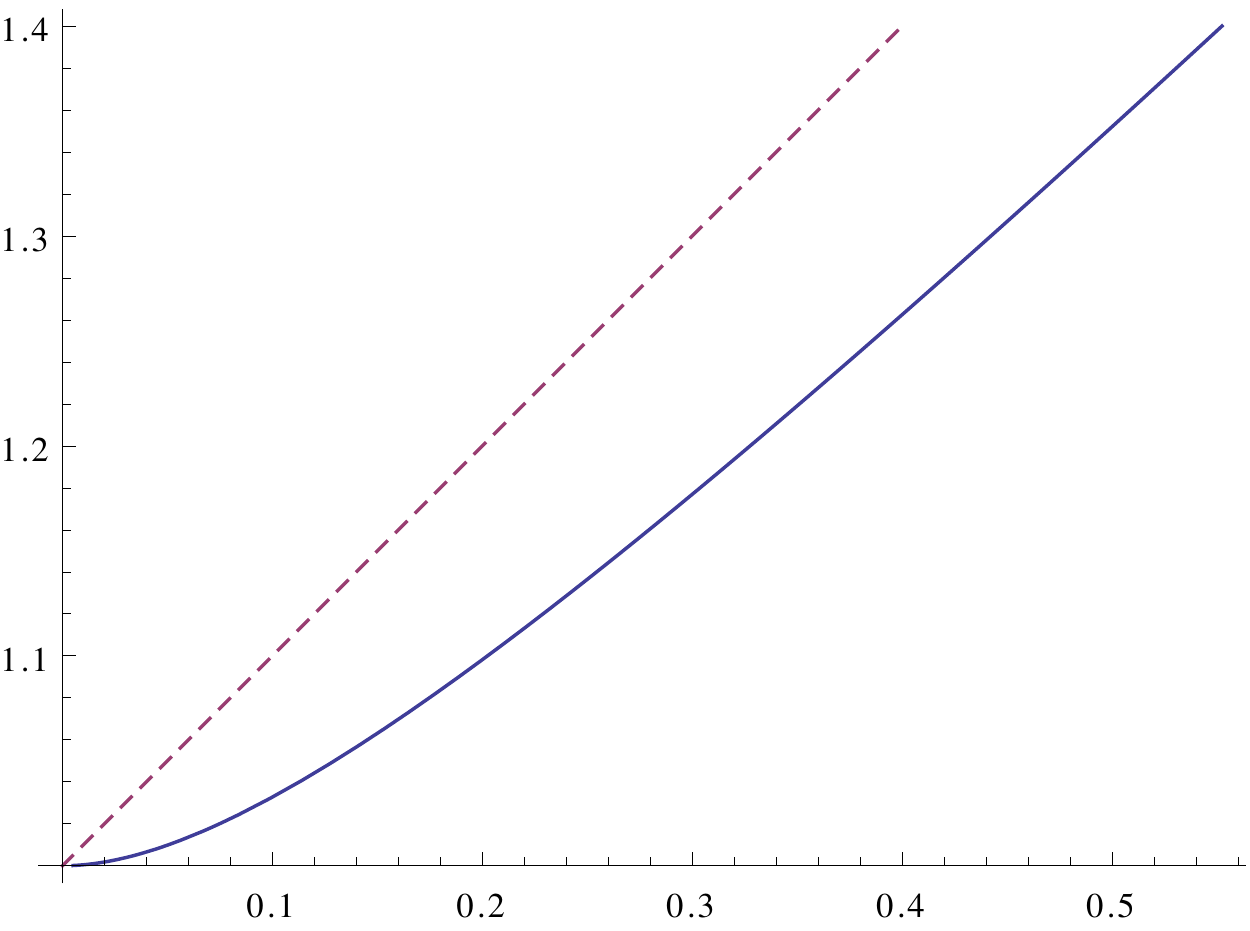}}%
    \put(0.6148828,0.0715254){\color[rgb]{0,0,0}\makebox(0,0)[lb]{\smash{$\color{black}\lambda=\frac \ell h$}}}%
    \put(0.07499698,0.48209667){\color[rgb]{0,0,0}\makebox(0,0)[lb]{\smash{$\color{black}\theta_Y=\frac{\tau_Y}{S_0}$}}}%
  \end{picture}%
\end{centering}
\endgroup%
  \caption{Solid line: renormalized effective yield strength $\tau_Y/S_0$ versus renormalized dissipative scale $\ell/h$, as from formula \eqref{eq:65}. Dashed line: the upper bound $\frac{\tau_Y} {S_0}<1+\frac {\ell}h$ derived in [5]. This plot agrees with the result computed numerically in [5] and reported in Fig. 4 thereof. When comparing the two figures, the reader should take into account that in the present paper the symbol $h$ denotes half the thickness of the strip, whereas in [5] the same symbol denotes the overall thickness.
}
  \label{fig:3}
\end{figure}

Our explicit formula provides additional insight concerning the asymptotic behavior of the actual yield strength for small and large values of $h$. In particular, from \eqref{eq:65} one finds that, for $0<\theta_Y-1\ll 1$,
\begin{equation*}
  \lambda  \sim \frac{\sqrt 2} \pi \sqrt{{\theta_Y}-1},
\end{equation*} which implies that, for $0<\lambda\ll 1$, the renormalized actual yield strength has the following asymptotic behavior:
\begin{equation*}
  \theta_Y -1 \sim \frac {\pi^2} 2\lambda^2 {\quad\mbox{for }\ 0<\lambda\ll 1}.
\end{equation*}
We also note that, as $\lambda=\ell/h \to \infty$, a linear relation is recovered:
$$
\theta_{{Y}} -\lambda \sim \frac{\pi}{4}  \quad\mbox{for } \ \lambda\gg 1.
$$

\section{Proof of Theorem \ref{bbvstb}}\label{S-proof-1}

Existence and uniqueness of the minimum in \eqref{eq:16} is readily ascertained through the direct method of the calculus of variations, owing to coercivity, lower semicontinuity, and convexity of $\Phi_\theta$ in $H^1_0(I)$:

\smallskip

\begin{lemma}
For any $\lambda>0$ there exists a unique minimizer $\phi_\lambda$ of the infimum problem in \eqref{eq:16}.
\end{lemma}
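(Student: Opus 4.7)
The plan is to apply the direct method of the calculus of variations on the reflexive Hilbert space $H^1_0(I)$, exploiting the decomposition $\Phi_\theta=E+\Psi-\theta\int_I(\cdot)\,{\rm d}r$ into a strictly convex quadratic, a convex integral functional, and a continuous linear functional.

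For \emph{coercivity}, I would observe that $E(\phi)\ge\tfrac{\kappa}{2}\min\{1,\Lambda^2\}\|\phi\|_{H^1_0(I)}^2$ and $\Psi(\phi)\ge 0$, while the loading term is bounded by $\bigl|\theta\int_I\phi\,{\rm d}r\bigr|\le\theta\sqrt{2}\,\|\phi\|_{L^2(I)}$; hence, provided $\kappa>0$, $\Phi_\theta(\phi)\to+\infty$ as $\|\phi\|_{H^1_0(I)}\to\infty$. For \emph{existence}, I would take a minimizing sequence $(\phi_n)\subset H^1_0(I)$: coercivity yields a (non-relabelled) weakly convergent subsequence $\phi_n\rightharpoonup\phi_\lambda$ in $H^1_0(I)$; the linear term passes to the limit by the compact embedding $H^1_0(I)\hookrightarrow L^2(I)$; and both $E$ and $\Psi$ are convex and strongly continuous on $H^1_0(I)$ (the latter because $(a,b)\mapsto\sqrt{a^2+\lambda^2 b^2}$ is globally Lipschitz on $\mathbb R^2$), so by Mazur's theorem they are weakly lower semicontinuous. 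Passing to the $\liminf$ identifies $\phi_\lambda$ as a minimizer.

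For \emph{uniqueness}, I would simply note that $E$ is strictly convex on $H^1_0(I)$ (a positive-definite quadratic form, when $\kappa>0$), while $\Psi$ and the linear term are convex; hence $\Phi_\theta$ is strictly convex and the minimizer is unique.

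I anticipate no serious obstacle, the statement being essentially a textbook application of the direct method. The only mildly delicate point is the weak lower semicontinuity of $\Psi$; should one wish to bypass Mazur, it can be obtained directly from Ioffe's theorem on integral functionals with convex nonnegative integrand.
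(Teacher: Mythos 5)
Your proof is correct and matches the paper's (one-line) argument: the direct method in the reflexive Hilbert space $H^1_0(I)$, using coercivity of the quadratic part $E$, weak lower semicontinuity of $\Psi$ (convex $+$ strongly continuous, hence weakly lsc by Mazur), and strict convexity for uniqueness. You are also right to flag the requirement $\kappa>0$: the paper's standing assumption only asks $\kappa\ge 0$, yet for $\kappa=0$ the functional $\Phi_\theta=\widetilde\Phi_\theta$ is positively $1$-homogeneous, so $m(\theta)=-\infty$ whenever $\theta>\theta_Y$ and the lemma as stated fails --- a small but genuine oversight in the paper.
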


\smallskip

The first step is to show that if the trivial state is not stable at a certain value of the renormalized shear stress $\theta$ during the loading process, then it is not stable for whatever higher value:

\smallskip

\begin{lemma}\label{lem:8}
The function ${[0,\Theta]\ni }\theta\mapsto m(\theta)$ defined in \eqref{eq:16} is non-increasing.
\end{lemma}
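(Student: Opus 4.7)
The plan is to compare $m(\theta_2)$ with $m(\theta_1)$ for $0\le\theta_1<\theta_2$ by testing the infimum defining $m(\theta_2)$ with the (unique) minimizer $\phi_{\theta_1}$ of $\Phi_{\theta_1}$. Since
\[
\Phi_{\theta_2}(\phi_{\theta_1})=\Phi_{\theta_1}(\phi_{\theta_1})-(\theta_2-\theta_1)\int_I\phi_{\theta_1}\,\d r=m(\theta_1)-(\theta_2-\theta_1)\int_I\phi_{\theta_1}\,\d r,
\]
the desired monotonicity $m(\theta_2)\le m(\theta_1)$ follows at once provided one knows $\int_I\phi_{\theta_1}\,\d r\ge 0$.

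The key step is therefore to show that the minimizer can be taken non-negative, hence (by uniqueness, which has just been established) coincides with a non-negative function. To this end I would observe that both $E(\phi)$ and $\Psi(\phi)$ depend on $\phi$ only through $\phi^2$ and $\phi_r^2$, so that replacing $\phi$ by $|\phi|\in H^1_0(I)$ leaves these two terms invariant (the chain rule gives $|\phi|_r=\operatorname{sign}(\phi)\phi_r$ a.e., so $|\phi|_r^2=\phi_r^2$). On the other hand, for $\theta\ge 0$ one has $-\theta\int_I|\phi|\,\d r\le-\theta\int_I\phi\,\d r$, whence
\[
\Phi_\theta(|\phi|)\le\Phi_\theta(\phi)\qquad\forall\,\phi\in H^1_0(I).
\]
Applying this to $\phi=\phi_\theta$ and using uniqueness (from the preceding lemma) forces $\phi_\theta=|\phi_\theta|\ge 0$, so that in particular $\int_I\phi_\theta\,\d r\ge 0$.

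Plugging this back into the initial comparison gives $m(\theta_2)\le m(\theta_1)$ for all $0\le\theta_1<\theta_2\le\Theta$. The main (minor) subtlety is ensuring that $|\phi|\in H^1_0(I)$ with $|\phi|_r^2=\phi_r^2$ a.e., which is the standard truncation/chain-rule statement for Sobolev functions; everything else is a one-line computation using the structure $\mathscr E(\theta,\phi)=E(\phi)-\theta\int_I\phi\,\d r$ and the $\theta$-independence of $\Psi$.
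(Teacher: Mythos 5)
Your proof is correct and follows essentially the same route as the paper's: test the infimum defining $m(\theta_2)$ with the minimizer $\phi_{\theta_1}$ and reduce to the nonnegativity of $\phi_{\theta_1}$, which in turn follows because replacing $\phi$ by $|\phi|$ does not increase $\Phi_\theta$. The only cosmetic difference is that the paper derives a strict inequality $\Phi_\theta(|\phi_\theta|)<\Phi_\theta(\phi_\theta)$ to reach a contradiction, whereas you derive the non-strict inequality and invoke uniqueness of the minimizer; both are valid and equivalent in substance.
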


\smallskip

\begin{proof}
Let ${\phi}_{\theta}$ be the unique minimizer of $\Phi_\theta$. First, we observe that
\begin{equation}\label{eq:g2}
{\phi}_{\theta}{\ge 0} \quad\mbox{ a.e. in $I$ {for all $\theta\ge 0$}.}
\end{equation}
Indeed, obviously $\phi_0\equiv 0$; for $\theta>0$, if $\phi_\theta<0$ in a set $J$ of positive measure, then {(by the definitions \eqref{eq:20} and \eqref{eq:g1} of $\mathcal E$, resp. $\Psi$)} we would have $\Phi_\theta(|\phi_\theta|)<\Phi_\theta(\phi_\theta)$, a contradiction. Thus, given $\theta_1\le\theta_2$, we have
\begin{eqnarray}
  \nonumber
  m(\theta_2) &\stackrel{\eqref{eq:16}}=& \Phi_{\theta_2}({\phi}_{\theta_2})
  \\ \nonumber &\le& \Phi_{\theta_2}({\phi}_{\theta_1}) \quad\mbox{(by def. of ${\phi}_{\theta_2}$)}
  \\
  &\stackrel{\eqref{eq:20},{\eqref{eq:g2}}}\le& \Phi_{\theta_1}({\phi}_{\theta_1}) {\stackrel{\eqref{eq:16}}=} m(\theta_1),
  \nonumber 
\end{eqnarray}
as desired.
\end{proof}

\smallskip

The previous lemma is expedient to arrive to the following characterization of $\theta_Y$.

\smallskip

\begin{proposition}\label{*vsb}
Let $\gamma$ be the unique energetic solution to \eqref{eq:34}-\eqref{eq:2} and let $\theta_Y$ and $m$ as in \eqref{eq:48}, resp. \eqref{eq:16}. Then
$$
\theta_Y=\inf\left\{\theta\ge 0:\ m(\theta)<0\right\}.
$$
\end{proposition}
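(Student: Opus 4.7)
The plan is to show the two inequalities $\theta_Y\le\theta^*$ and $\theta^*\le\theta_Y$, where $\theta^*:=\inf\{\theta\ge 0:\ m(\theta)<0\}$, by exploiting the equivalence, under the virgin state, between the global stability condition \eqref{eq:50} and the sign condition $m(\theta)\ge 0$. The key observation is that plugging $\gamma(\theta)=0$ into \eqref{eq:50} and recalling $\mathscr E(\theta,0)=0$ yields
\begin{equation*}
0\le\mathscr E(\theta,v)+\Psi(v)=\Phi_\theta(v)\qquad\text{for every }v\in H^1_0(I),
\end{equation*}
which is exactly $m(\theta)\ge 0$; conversely, $m(\theta)\ge 0$ means that the trivial state is stable at level $\theta$.

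\emph{Step 1: $\theta_Y\le \theta^*$.} I would take any $\theta<\theta_Y$; by the definition \eqref{eq:48} of $\theta_Y$, one has $\gamma(\vartheta)=0$ for every $\vartheta\in[0,\theta]$, and in particular at $\vartheta=\theta$. The stability condition \eqref{eq:50} then reduces, as observed above, to $m(\theta)\ge 0$, hence $\theta\notin\{\theta:m(\theta)<0\}$ and therefore $\theta\le\theta^*$. Letting $\theta\uparrow\theta_Y$ gives $\theta_Y\le\theta^*$.

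\emph{Step 2: $\theta^*\le\theta_Y$.} For any $\theta<\theta^*$, I would argue that the trivial evolution $\gamma\equiv 0$ on $[0,\theta]$ qualifies as an energetic solution in the sense of Definition \ref{def-sol}. Indeed, (a) the initial condition $\gamma(0)=0$ is obviously met; (b) for every $\vartheta\in[0,\theta]$, monotonicity of $m$ (Lemma \ref{lem:8}) together with $\theta<\theta^*$ implies $m(\vartheta)\ge m(\theta)\ge 0$, hence $0=\mathscr E(\vartheta,0)\le\mathscr E(\vartheta,v)+\Psi(v-0)$ for every $v\in H^1_0(I)$, i.e.\ the stability \eqref{eq:50} holds at every $\vartheta\in[0,\theta]$; (c) both sides of the energy balance \eqref{eq:30} evaluated along the constant zero trajectory vanish identically. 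By the uniqueness part of Proposition~1, the energetic solution must therefore coincide with the zero function on $[0,\theta]$, so that $\theta\le\theta_Y$. Letting $\theta\uparrow\theta^*$ yields $\theta^*\le\theta_Y$, which combined with Step~1 gives the claim.

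The main (minor) subtlety will be the behaviour at the critical value $\theta^*$ itself: since $m$ is only known to be non-increasing, $m(\theta^*)$ could a priori be either zero or negative, and correspondingly $\theta^*$ might or might not belong to the set defining $\theta_Y$. I do not think one needs to resolve this to obtain the equality, because the argument above proceeds through strict inequalities $\theta<\theta_Y$ and $\theta<\theta^*$ and then passes to the supremum/infimum; everything else is just careful bookkeeping between the definitions \eqref{eq:48} and \eqref{eq:16}, together with the one-line computation that identifies stability of the virgin state with non-negativity of $\Phi_\theta$.
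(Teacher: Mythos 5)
Your proof is correct and follows essentially the same route as the paper: both establish the two inequalities by (i) noting that $\gamma(\theta)=0$ together with stability \eqref{eq:50} forces $m(\theta)\ge 0$, and (ii) using monotonicity of $m$ (Lemma~\ref{lem:8}) and uniqueness of the energetic solution to show the trivial trajectory is the solution up to the critical threshold. The only cosmetic difference is that you prove the inequality $\theta_Y\le\theta^*$ directly, whereas the paper argues that direction by contradiction; the substance is the same.
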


\smallskip

\begin{proof} Let us set $\widehat\theta=\inf\left\{\theta\ge 0:\ m(\theta)<0\right\}$. We notice that, since $m(\theta)$ is nonincreasing, $m(\theta)=0$ in $[0,\widehat\theta)$. Hence, by direct substitution into \eqref{eq:14}, we see that the trivial function $\theta\mapsto 0$ is an energetic solution on the interval $[0,\widehat\theta)$. By the uniqueness of the energetic solution, and by \eqref{eq:48}, it follows that $\theta_Y\ge \widehat\theta$.

\smallskip

The reverse inequality follows from the monotonicity of $\theta\mapsto m(\theta)$: suppose indeed that $\widehat\theta < \theta_Y$; then, by Lemma \ref{lem:8} there exists $\tilde\theta<\theta_Y$ such that $m(\tilde\theta)<0$; however, $\tilde\theta<\theta_Y$ implies that $\gamma(\tilde\theta)=0$; thus, by \eqref{eq:50} and \eqref{eq:16}, this means that $m(\tilde \theta)=0$, whence a contradiction.
\end{proof}

\smallskip

We now show that the reduced stability indicator defined in \eqref{eq:27} can be used to detect the inset of plastic flow. Indeed, we have the following equivalence:

\smallskip

\begin{proposition}\label{bvsbb}
\begin{equation}\label{eq:8}
\theta_Y=\inf\{\theta\ge 0:\widetilde m(\theta)<0\}.
\end{equation}
\end{proposition}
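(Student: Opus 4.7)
In view of Proposition \ref{*vsb}, it suffices to show that the two sets
\[
\{\theta\ge 0:\ m(\theta)<0\} \quad\text{and}\quad \{\theta\ge 0:\ \widetilde m(\theta)<0\}
\]
coincide; equivalently, that $m(\theta)<0 \Leftrightarrow \widetilde m(\theta)<0$ for every $\theta\ge 0$. The key identity to exploit is the decomposition
\[
\Phi_\theta(\phi)=E(\phi)+\widetilde\Phi_\theta(\phi)\qquad\forall\,\phi\in H^1_0(I),
\]
which follows immediately from the definitions \eqref{eq:16}, \eqref{eq:20}, and \eqref{eq:27}. Together with $E(\phi)\ge 0$ and the homogeneity properties $E(t\phi)=t^2 E(\phi)$, $\widetilde\Phi_\theta(t\phi)=t\,\widetilde\Phi_\theta(\phi)$ for $t\ge 0$, this is essentially all the structure one needs.

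The first implication, $m(\theta)<0 \Rightarrow \widetilde m(\theta)<0$, is immediate: if $\Phi_\theta(\phi)<0$ for some test $\phi$, then $\widetilde\Phi_\theta(\phi)=\Phi_\theta(\phi)-E(\phi)\le\Phi_\theta(\phi)<0$, so $\widetilde m(\theta)<0$. For the converse implication, $\widetilde m(\theta)<0 \Rightarrow m(\theta)<0$, I pick $\phi\in H^1_0(I)$ with $\widetilde\Phi_\theta(\phi)<0$ and scale: for $t>0$,
\[
\Phi_\theta(t\phi)=t^2 E(\phi)+t\,\widetilde\Phi_\theta(\phi).
\]
Since the coefficient of $t$ is strictly negative while the coefficient of $t^2$ is nonnegative, choosing $t>0$ small enough (specifically any $0<t<-\widetilde\Phi_\theta(\phi)/E(\phi)$, with the convention that $t$ is arbitrary if $E(\phi)=0$) makes $\Phi_\theta(t\phi)<0$, whence $m(\theta)<0$.

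Combining the two implications yields $\{\theta\ge 0: m(\theta)<0\}=\{\theta\ge 0: \widetilde m(\theta)<0\}$, and taking the infimum of both sides together with Proposition \ref{*vsb} gives \eqref{eq:8}. There is no real obstacle here: the only subtlety is verifying that the scaling argument still works in the degenerate case $E(\phi)=0$, but this is handled by the observation that then $\Phi_\theta(t\phi)=t\widetilde\Phi_\theta(\phi)<0$ for every $t>0$.
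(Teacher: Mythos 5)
Your proof is correct and follows essentially the same route as the paper's: reduce via Proposition \ref{*vsb} to the equivalence $m(\theta)<0 \Leftrightarrow \widetilde m(\theta)<0$, get the forward direction from $\widetilde\Phi_\theta\le\Phi_\theta$, and get the converse by scaling a test function and using the $1$-homogeneity of $\widetilde\Phi_\theta$ against the quadratic growth of $E$ (the paper phrases the scaling as the limit $\lim_{\alpha\to 0^+}\Phi_\theta(\alpha\widetilde\phi)/\alpha=\widetilde\Phi_\theta(\widetilde\phi)<0$, which is the same computation). Your explicit treatment of the case $E(\phi)=0$ is a minor but harmless addition; note that when $\kappa>0$ this case cannot actually occur with $\widetilde\Phi_\theta(\phi)<0$, and when $\kappa=0$ it is indeed the generic case, so it does no harm to spell it out.
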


\smallskip

\begin{proof}
In view of Proposition \ref{*vsb}, it suffices to show that
\begin{equation*}
   m(\theta)<0\qquad\textrm{if and only if}\qquad \widetilde m(\theta)<0.
  \end{equation*}
Since by definition $\tilde \Phi_\theta \le \Phi_\theta$, $m(\theta)<0$ obviously implies $\tilde m(\theta)<0$.
For the reverse implication, let us assume  $\widetilde m(\theta)<0$. Then there exists $\widetilde{\phi}\in H^1_0(I)$ such that $\widetilde\Phi_\theta(\widetilde{\phi})<0$. On the other hand, by the 1-homogeneity of $\widetilde\Phi_\theta$,
\begin{equation*}
  \lim_{\alpha\to 0+} \frac{\Phi_\theta(\alpha\widetilde{\phi})}\alpha=\widetilde\Phi_\theta(\widetilde{\phi})<0.
\end{equation*}
Thus ${\Phi_\theta(\alpha\widetilde{\phi})}<0$ for $\alpha>0$ sufficiently small, whence $m(\theta)<0$.
\end{proof}

\smallskip

With Proposition \ref{bvsbb} at hand we are now ready to establish the variational characterization we have been after.

\smallskip

\begin{proof}[Proof of Theorem \ref{bbvstb}]
Let
\begin{equation}\label{eq:11}
\widehat\theta_Y(\lambda):=\inf\left\{\Psi(\phi)
{: \ \phi}\in H^1_0(I),\ \int_I{\phi}{\rm d}r=1\right\}.
\end{equation}
On recalling the definitions of $\Psi$ and $\widetilde\Phi_\theta$ given in \eqref{eq:g1},  respectively \eqref{eq:27}, we see that the inequality
\begin{equation}\label{eq:32}
 \theta_Y\le\widehat\theta_Y(\lambda)
\end{equation}
is implied by the following chain of implications:
\begin{equation*}
  \begin{aligned}
    \widehat\theta_Y(\lambda)<\theta
\quad
&\stackrel{\phantom{(0.0)}}\Rightarrow \quad\Psi(\bar{\phi})<\theta
\textrm{ for some $\bar{\phi}\in H^1_0(I)$
{such that} }
\int_I\bar{\phi}{\rm d}r=1\\
&\stackrel{\phantom{(0.0)}}\Rightarrow
\quad
 \inf_{\phi\in H^1_0(I)}\left(-\int_I \theta{\phi} {\rm d}r +\Psi({\phi})\right)<0\\
&\stackrel{\eqref{eq:27}}\Rightarrow \quad \widetilde m(\theta)
<0
\\
&\stackrel{\eqref{eq:8}}\Rightarrow
\quad
\theta_Y\le\theta.
  \end{aligned}
\end{equation*}
Having established \eqref{eq:32}, it remains for us to prove the reverse inequality:
\begin{equation}\label{eq:33}
  \theta_Y\ge \widehat\theta_Y(\lambda).
\end{equation}
To this aim, let $\theta\in (0, \widehat\theta_Y(\lambda))$. By the definition \eqref{eq:11} of $\widehat\theta_Y(\lambda)$, we have
\begin{equation}\label{ui1}
\theta\int_I \phi{\rm d} r=\theta <\Psi({\phi})\quad\mbox{for all ${\phi}\in H^1_0(I)$
such that $\displaystyle \int_I{\phi}{\rm d}r=1$}.
\end{equation}
Since both sides of the inequality in \eqref{ui1} are positively $1$-homogeneous, \eqref{ui1} upgrades to
\begin{equation}\label{ui2}
\theta\int_I \phi{\rm d} r <\Psi({\phi})\quad\mbox{for all ${\phi}\in H^1_0(I)$
such that $\displaystyle \int_I{\phi}{\rm d}r>0$}.
\end{equation}
In turn, since $\Psi$ is non-negative, \eqref{ui2} upgrades to
\begin{equation}\label{ui3}
0\le \Psi({\phi}) - \theta\int_I \phi{\rm d} r \stackrel{\eqref{eq:27}}= \widetilde \Phi_\theta({\phi}) \quad\mbox{for all ${\phi}\in H^1_0(I)$}
\end{equation}
which holds for all $\theta\in (0, \widehat\theta_Y(\lambda))$.
Summing up, we have the implication:
$$
0\le\theta<\widehat\theta_Y(\lambda) \ \stackrel{\eqref{ui3}}\Rightarrow \ \widetilde
m(\theta)\ =\  \inf_{\phi\in H^1_0
(I)} \widetilde\Phi_\theta({\phi}) \ge 0 \
\stackrel{\eqref{eq:8}}\Rightarrow \ \theta\le\theta_Y,
$$
whence \eqref{eq:33}, since $\theta_Y\ge 0$ by definition.
\end{proof}

\section{Proof of Theorem \ref{T2}}\label{S-proof-2}

The infimum problem in \eqref{eq:15} was addressed in Ref.~\cite{AmarCGR2013JMAA}. Consider the {\it relaxation} of $\Psi$,
\begin{equation}
    \label{eq:59}
    \bar\Psi({\phi}):=\inf\Big\{\liminf_{k\to\infty}\Psi({\phi}_k):\ \{{\phi}_k\}\subseteq W^{1,1}_0(I), \ {\phi}_k\to{\phi}\textrm{ in }L^1(I)\Big\},
\end{equation}
i.e. the largest lower semicontinous extension of $\Psi$.
It is shown in Ref.~\cite{AmarCGR2013JMAA} that the
relaxation $\bar\Psi$ has the following representation for $\phi\in BV(I)$:\footnote{Here $\|\mu\|$ denotes the total variation of a measure $\mu$ (see e.g. \cite[Def. 1.4]{AmbroFP2000}) and $\frac{{\rm d} \phi}{{\rm d} r}$, resp. $D^s \phi$, denote the absolutely continuous, resp. singular, part  of $D\phi$ with respect to the Lebesgue measure (see e.g. \cite[Th. 1.28 and \S 3.9]{AmbroFP2000}). We also refer to \cite{AmbroFP2000} for definitions and basic properties of the spaces $BV(I)$ and $SBV(I)$.}
\begin{equation}
  \label{eq:60}
\bar\Psi (\phi)= \int_I \sqrt{{\phi}^2+\lambda^2\left(\frac{{\rm d}\phi}{{\rm d}r}\right)^2}{\rm d}r+\lambda \|D^s{\phi}\|(I)+\lambda\left(|\phi(-1)| +|\phi(+1)|\right).
\end{equation}
Notice that, as is customary in the $BV$ setting, homogeneous boundary conditions are now incorporated in the functional through the penalization term $\lambda|\phi|(\partial I)=\lambda|\phi-0|(\partial I)$, which measures the jump between the trace of $\phi$ and the prescribed null value.

\smallskip

The following results were established in Ref.~\cite{AmarCGR2013JMAA}.

\smallskip

\begin{theorem}[see Thm. 5.1 in Ref.~\cite{AmarCGR2013JMAA}] \label{T51}
Let $\bar\Psi$ as in \eqref{eq:59}. There exists a unique ${\phi}_Y\in SBV(I)$ such that $\int_I{\phi}_Y{\rm d}r=1$ and
\begin{equation*}
  \bar\Psi({\phi_Y})=\min\Big\{\bar\Psi({\phi}):{\phi}\in L^1(I), \int_I{\phi}{\rm d}r=1\Big\}.
\end{equation*}
Moreover, ${\phi}_Y$ is even, strictly decreasing in $[0,1)$, and smooth in $(-1,1)$; furthermore, it  solves the \emph{Euler-Lagrange equation}
\begin{equation}
    \label{eq:53bis}
\bar\Psi({\phi_Y}) =\frac{{\phi_Y}}{\sqrt{{\phi_Y}^2+\lambda^2\big(\frac{{\rm d}\phi_Y}{{\rm d} r}\big)^2}}-\lambda^2\frac{{\rm d}}{{\rm d} r}\frac{\frac{{\rm d}\phi_Y}{{\rm d} r}}{\sqrt{{\phi_Y}^2+\lambda^2\big(\frac{{\rm d}\phi_Y}{{\rm d} r}\big)^2}} \quad\mbox{in $I$}
\end{equation}
and it satisfies
\begin{equation}
  \label{eq:56bis}
\lim_{r\to 1^-}\frac{{\phi}_Y(r)}{{\phi}_Y(0)}=\frac{\theta_Y-1}{\theta_Y}\quad\mbox{and}\quad \lim_{r\to 1^-} \frac{{\rm d}\phi_Y}{{\rm d} r}(r)=-\infty.
\end{equation}
\end{theorem}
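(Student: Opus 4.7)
My plan is to prove Theorem~\ref{T51} by the direct method of the calculus of variations in $BV(I)$, followed by a regularity and uniqueness analysis driven by the Euler--Lagrange equation and a first integral that it engenders.

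For existence, observe that by construction \eqref{eq:59}, $\bar\Psi$ is sequentially $L^1(I)$-lower semicontinuous, and by \eqref{eq:60} it satisfies the coercivity estimate
\begin{equation*}
\bar\Psi(\phi)\ge \int_I |\phi|\,{\rm d}r+\lambda\|D\phi\|(I)+\lambda\bigl(|\phi(-1)|+|\phi(+1)|\bigr)\ge \min(1,\lambda)\,\|\phi\|_{BV(I)};
\end{equation*}
minimizing sequences for the constrained problem thus admit a subsequence converging strongly in $L^1(I)$ (by compact embedding) and weakly-$\ast$ in $BV(I)$, and the linear constraint $\int_I \phi\,{\rm d}r=1$ passes to the limit. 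This yields a minimizer $\phi_Y\in BV(I)$, for which I set $\theta_Y:=\bar\Psi(\phi_Y)$. Since replacing $\phi_Y$ by $|\phi_Y|$ only lowers $\bar\Psi$, I may assume $\phi_Y\ge 0$; convexity of $\bar\Psi$ together with invariance of the problem under $r\mapsto -r$ implies that $(\phi_Y(r)+\phi_Y(-r))/2$ is again a minimizer, which, pending uniqueness, delivers evenness.

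For the Euler--Lagrange equation and interior regularity, I would argue as follows. On any open set where $\phi_Y>0$ the integrand $F(a,b)=\sqrt{a^2+\lambda^2 b^2}$ is smooth and strictly convex in $b$, so classical one-dimensional Tonelli-type arguments plus bootstrapping give $\phi_Y\in C^\infty$ there; a strong-maximum-principle argument applied to the equation below then rules out interior zeros and extends smoothness to all of $(-1,1)$. The first variation along $v\in C_c^\infty(I)$ with $\int_I v\,{\rm d}r=0$ produces \eqref{eq:53bis} with a Lagrange multiplier $\mu$; testing formally with $v=\phi_Y$ and invoking positive $1$-homogeneity of $\bar\Psi$ via Euler's identity identifies $\mu=\bar\Psi(\phi_Y)=\theta_Y$. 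Since \eqref{eq:53bis} is autonomous, the Beltrami identity for $F$ yields the first integral
\begin{equation*}
\mathcal H(r):=-\frac{\phi_Y(r)^2}{\sqrt{\phi_Y(r)^2+\lambda^2(\phi_Y'(r))^2}}+\theta_Y\,\phi_Y(r)\equiv\mathrm{const.}\quad\mbox{on $(-1,1)$,}
\end{equation*}
and evenness forces $\phi_Y'(0)=0$, so $\mathcal H\equiv(\theta_Y-1)\phi_Y(0)$.

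For the boundary behaviour \eqref{eq:56bis}, I would combine integration by parts in \eqref{eq:53bis} with the first variation of the trace penalty $\lambda|\phi(1)|$ in \eqref{eq:60} to obtain the natural boundary condition $\lambda\phi_Y'(1)/\sqrt{\phi_Y(1)^2+\lambda^2(\phi_Y'(1))^2}=-1$, which upon squaring forces $\phi_Y(1)^2=0$ whenever $\phi_Y'(1)$ is finite, and is therefore compatible with a positive trace $\phi_Y(1^-)>0$ only in the limiting regime $\phi_Y'(r)\to-\infty$ as $r\to 1^-$---this is the second assertion in \eqref{eq:56bis}. Inserting this asymptotic behaviour into $\mathcal H$ gives $\mathcal H\to\theta_Y\phi_Y(1^-)$, and equating with $(\theta_Y-1)\phi_Y(0)$ produces $\phi_Y(1^-)/\phi_Y(0)=(\theta_Y-1)/\theta_Y$. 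Strict monotonicity on $[0,1)$ then follows because any interior critical point would, via $\mathcal H$, force $\phi_Y=\phi_Y(0)$ there, contradicting the boundary trace. Uniqueness is finally obtained by solving $\mathcal H(r)=(\theta_Y-1)\phi_Y(0)$ for $\phi_Y'$ as a smooth, strictly negative function of $\phi_Y$ on $[0,1)$ and integrating the resulting first-order autonomous ODE with $\phi_Y'(0)=0$; the mass constraint fixes $\phi_Y(0)$ and with it $\theta_Y$. The principal obstacles I foresee are (i) excluding a Cantor part of $D\phi_Y$ so that $\phi_Y\in SBV(I)$ rather than merely $BV(I)$, via a careful one-sided comparison exploiting \eqref{eq:60}; and (ii) rigorously justifying the degenerate boundary behaviour $\phi_Y'(1^-)\to-\infty$, since the natural boundary condition becomes singular when $\phi_Y(1^-)>0$ and must be interpreted via sharp blow-up estimates rather than a classical Neumann-type calculation.
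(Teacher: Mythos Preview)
The paper does not prove this theorem at all: it is explicitly quoted from Ref.~\cite{AmarCGR2013JMAA} (``The following results were established in Ref.~\cite{AmarCGR2013JMAA}''), and is then used as a black box in the proof of Theorem~\ref{T2}. So there is no ``paper's own proof'' to compare your attempt against.

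As for the sketch itself, the overall architecture---direct method in $BV$, interior regularity via strict convexity in the gradient variable, identification of the Lagrange multiplier through $1$-homogeneity, and then exploitation of the Beltrami first integral $\mathcal H$ to obtain monotonicity, the boundary ratio, and uniqueness---is indeed the route taken in \cite{AmarCGR2013JMAA}. Two remarks. First, your coercivity inequality is false as written: $\sqrt{a^2+\lambda^2 b^2}\ge |a|+\lambda|b|$ does not hold, so you cannot get $\bar\Psi(\phi)\ge \int_I|\phi|+\lambda\|D\phi\|(I)+\cdots$; you only get each term separately (or the sum with a factor $1/\sqrt 2$), which is still enough for $BV$-compactness but should be stated correctly. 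Second, the two obstacles you flag at the end---ruling out a Cantor part of $D\phi_Y$ and making rigorous the singular natural boundary condition---are precisely the places where \cite{AmarCGR2013JMAA} does the real work, and your sketch does not yet contain the mechanism that resolves them; in particular, the passage ``upon squaring forces $\phi_Y(1)^2=0$ whenever $\phi_Y'(1)$ is finite, and is therefore compatible with a positive trace only in the limiting regime $\phi_Y'\to-\infty$'' is circular as an argument (it presupposes the natural boundary condition holds pointwise at $r=1$, which is exactly what fails). These are genuine gaps rather than routine details.
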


\smallskip

\begin{remark}{\rm
Notably, \eqref{eq:56bis} shows that the solution $\phi_Y\in SBV(I)$ of the relaxed minimization problem
does {\em not} satisfy the boundary conditions ${\phi}(-1)={\phi}(1)=0$;
generally speaking, this amounts to saying that, in order to minimize $\bar\Psi$ with mass constraint, paying a jump discontinuity at the boundary is cheaper than attaining the boundary value zero.
}\end{remark}

\smallskip

We are now ready to prove Theorem \ref{T2}.

\smallskip

\begin{proof}[Proof of Theorem \ref{T2}]
In view of Theorem \ref{bbvstb} and since $H^1_0(I)$ is dense in $BV(I)$,
\begin{equation}\label{passo2}
\bar\Psi({\phi_Y})=\theta_Y.
\end{equation}
We also notice that, since ${{\rm d}{\phi}_Y}/{{\rm d}r}<0$ in $[0,1)$ and $\phi_Y$ is positive with $\int_I\phi_Y(r){\rm d} r=1$,
\begin{equation}\label{eq:31}
\theta_Y \stackrel{\eqref{passo2}}= \bar\Psi({\phi_Y}) \stackrel{\eqref{eq:60}} \ge  \int_I \sqrt{{\phi_Y}^2+\lambda^2\Big(\frac{{\rm d}\phi_Y}{{\rm d}r}\Big)^2}{\rm d}r > \int_I |\phi_Y| {\rm}d r =1.
\end{equation}
Now, consider the function
\begin{equation*}
\zeta(r):=-\lambda\frac{\frac{{\rm d}\phi_{Y}}{{\rm d} r}}{\sqrt{\phi_{Y}^2 +\lambda^2\left(\frac{{\rm d}\phi_{Y}}{{\rm d} r}\right)^2}}{.}
\end{equation*}
Since $\phi_{Y}$ is smooth and positive in $I$, $\zeta$ is smooth as well. We note that
\begin{eqnarray*}
1-\zeta^2 = 1- \lambda^2 \frac{\left(\frac{{\rm d}\phi_{Y}}{{\rm d} r}\right)^2}{\phi_{Y}^2 +\lambda^2\left(\frac{{\rm d}\phi_{Y}}{{\rm d} r}\right)^2}.
= \frac{\left(\phi_{Y}\right)^2}{\phi_{Y}^2 +\lambda^2\left(\frac{{\rm d}\phi_{Y}}{{\rm d} r}\right)^2}.
\end{eqnarray*}
Hence, since $\phi_Y>0$,
\begin{equation}\label{passo1}
\frac{\phi_{Y}}{\sqrt{\phi_{Y}^2 +\lambda^2\left(\frac{{\rm d}\phi_{Y}}{{\rm d} r}\right)^2}}= \sqrt{1-\zeta^2}.
\end{equation}
By making also use of the Euler-Lagrange equation, we see that $\zeta$ satisfies the following differential equation:
\begin{eqnarray}
  \lambda\frac{{\rm d}\zeta}{{\rm d}r}
   \stackrel{\eqref{eq:53bis}}=  \bar\Psi({\phi_Y})  - \frac{\phi_Y}{\sqrt{\phi_{Y}^2 +\lambda^2\left(\frac{{\rm d}\phi_{Y}}{{\rm d} r}\right)^2}}
   \label{eq:26} & \stackrel{\eqref{passo1},\eqref{passo2}}= &
  \theta_Y-\sqrt{1-\zeta^2}.
\end{eqnarray}
It follows from \eqref{eq:31} and \eqref{eq:26} that $\frac{\rm d\zeta}{{\rm d}r}>0$. Hence,
\begin{equation*}
\frac{1}{\lambda} \stackrel{\eqref{eq:26}}=\int_{0}^{1} \frac{\frac{{\rm d}\zeta}{{\rm d}r}}{\theta_Y-\sqrt{1-\zeta^2}}\,{\rm d}r = \int_{\zeta(0)}^{\zeta(1-)} \frac 1 {\theta_Y-\sqrt{1-\zeta^2}} {\rm d}\zeta.
\end{equation*}
In addition, since $\phi_{Y}$ is even and because of \eqref{eq:56bis}, we have that
\begin{equation*}
\zeta(0)=0\quad\mbox{and}\quad  \lim_{r\to 1-}\zeta(r)=1.
\end{equation*}
Therefore,
\begin{equation}\label{eq:23bis}
\frac{1}{\lambda} = \int_{0}^{1} \frac{{\rm d}\zeta}{\theta_Y-\sqrt{1-\zeta^2}}.
\end{equation}
The integral on the right-hand side of \eqref{eq:23bis} is well defined and can be computed explicitly. As a result, we arrive at formula \eqref{eq:65} for the renormalized actual yield stress.
\end{proof}

\appendix

\section{The nonlocal flow rule}

In this section we briefly recapitulate the steps leading to the flow rule \eqref{eq:21}, as devised in Ref.~\cite{AnandGLG2005JMPS}, with a few changes from the original path. At variance with the previous sections, we do not assume proportional loading. Accordingly, the independent variables are now $y\in (-h,+h)$ and $t$, which stands for time, and the index $y$ denotes partial differentiation with respect to $y$.

\subsection{Principle of virtual powers}

We start from the decomposition
\begin{equation}\label{eq:5}
  u_y=\gamma^{\rm e}+\gamma^{\rm p}
\end{equation}
of the \emph{shear strain} $u_y$ into an \emph{elastic part} $\gamma^{\rm e}$ and a \emph{plastic part} $\gamma^{\rm p}$. This decomposition is accompanied by the prescription that, given any part $P=(a,b)\subset (-h,+h)$, the internal power expended within $P$ has the form:
\begin{equation}\label{eq:38}
  \mathscr W_{\rm int}(P)=\int_P \tau\dot \gamma^{\rm e}+\tau^{\rm p}\dot\gamma^{\rm p}+k^{\rm p}\dot\gamma^{\rm p}_y{\rm d}y.
\end{equation}
Thus, power expenditure by the \emph{macroscopic shear stress} $\tau$ is accompanied by working of the \emph{plastic microstress} $\tau^{\rm p}$ and \emph{gradient microstress} $k^{\rm p}$. If body forces are left out of the picture, the external power expended on $P=(a,b)$ is localized on the boundary $\partial P=\{a,b\}$ and has the form:
\begin{equation*}
  \mathscr W_{\rm ext}(P)=\widehat\tau(b)\dot u(b)+\widehat k^{\rm p}(b)\dot\gamma^{\rm p}(b)-\widehat\tau(a)\dot u(a)-\widehat k^{\rm p}(a)\dot\gamma^{\rm p}(a),
\end{equation*}
where $\widehat\tau$ and $k^{\rm p}$ are, respectively, the \emph{macroscopic} and the \emph{microscopic} shear tractions. The application of the principle of virtual powers yields:

\smallskip

\begin{itemize}

\item[1)] the identification between stress and  traction
\begin{equation*}
\tau=\widehat\tau,
\end{equation*}
along with the \emph{macroscopic-force balance}:
\begin{equation}\label{eq:68}
  \tau_y=0;
\end{equation}
\item[2)] the identification of $\widehat k^{\rm p}$ with $k^{\rm p}$, along with the \emph{microscopic force-balance}:
\begin{equation*}
  \tau=\tau^{\rm p}- k^{\rm p}_{y}.
\end{equation*}
\end{itemize}

\subsection{Constitutive prescriptions}

Consistent with the choice \eqref{eq:38} for the internal power expenditure, it is assumed in Ref.~\cite{AnandGLG2005JMPS} that the \emph{free-energy density} $\varphi$ depends on the triplet $(\gamma^{\rm e},\gamma^{\rm p},\gamma^{\rm p}_y)$ through a  constitutive equation of the form:
  \begin{equation*}
    \varphi=\widehat\varphi(\gamma^{\rm e},\gamma^{\rm p},\gamma^{\rm p}_y).
  \end{equation*}
It is also assumed that the constitutive mapping delivering the free-energy density is the sum:
\begin{equation*}
\widehat\varphi(\gamma^{\rm e},\gamma^{\rm p},\gamma^{\rm p}_y)=\widehat\varphi^{\rm e}(\gamma^{\rm e})+\widehat\varphi^{\rm p}(\gamma^{\rm p},\gamma^{\rm p}_{y})
\end{equation*}
of an \emph{elastic-energy mapping} $\widehat\varphi^{\rm e}$, which takes into account the elastic shear, and a \emph{defect-energy mapping} $\hat\varphi^{\rm p}$, which depends on the plastic shear and on its gradient. In particular, the elastic-energy mapping is given the form $\widehat\varphi^{\rm e}(\gamma^{\rm e})=\frac 12 G(\gamma^{\rm e})^2$, with $G>0$ the \emph{shear modulus}. This assumption is accompanied by the standard constitutive prescription $\tau=\frac{\partial \widehat \varphi^{\rm e}}{\partial{\gamma^{\rm e}}}$, whence:
\begin{equation}\label{eq:6}
\tau=G\gamma^{\rm e}.
\end{equation}
The microstresses are then split into an energetic part and a dissipative part by setting
\begin{equation*}
  \tau^{\rm p}=\tau^{\rm dis}+\tau^{\rm en},\qquad k^{\rm p}=k^{\rm dis}+k^{\rm en},
\end{equation*}
where
\begin{equation*}
  \tau^{\rm en}=\frac{\partial\widehat\varphi^{\rm p}}{\partial\gamma^{\rm p}},\qquad k^{\rm en}=\frac{\partial\widehat\varphi^{\rm p}}{\partial\gamma^{\rm p}_y},
\end{equation*}
so that the following reduced form of the dissipation inequality is arrived at:
\begin{equation*}
  0\le \tau^{\rm dis}\dot\gamma^{\rm p}+k^{\rm dis} \dot\gamma^{\rm p}_{y}.
\end{equation*}
By analogy with the constitutive equations describing viscoplastic behavior in metals, the following constitutive equations have been considered in Ref.~\cite{AnandGLG2005JMPS}:
\begin{equation}\label{eq:49}
\begin{aligned}
&\tau^{\rm p}=S\left(\frac{d^{\rm p}}{d_0}\right)^m
\frac{\dot\gamma^{\rm p}}{d^{\rm p}},\qquad   d^{\rm p}=\sqrt{(\dot\gamma^{\rm p})^2+\ell^2\left(\dot\gamma^{\rm p}_{y}\right)^2},\\
&\tau^{\rm dis}=S L^2 \gamma^{\rm p}_{y},\qquad k^{\rm dis}=S_0\ell^2\left(\frac{d^{\rm p}}{d_0}\right)^m
\frac{\dot\gamma^{\rm p}_y}{d^{\rm p}},\\
& \dot S=H(S)d^{\rm p},\qquad S(0)=S_0.\\
\end{aligned}
\end{equation}
Here: $S$ is the \emph{current yield strength}, an internal variable whose value at time $t=0$ is equal to the \emph{initial yield strength} $S_0$ and whose time derivative is proportional to the \emph{effective flow rate} $d^{\rm p}$ through a (isotropic) \emph{hardening/softening function} $H(S)$; $d_0$ is the \emph{reference flow rate}; $m>0$ is the \emph{rate-sensitivity parameter}.

\smallskip

The constitutive prescription (\ref{eq:21b}) follows by setting $H(S)=0$ (no isotropic hardening) and by formally letting $m\to 0$ in \eqref{eq:49} (rate-independent limit). The partial differential equation (\ref{eq:21}a) is recovered by choosing
\begin{equation*}
\widehat\varphi^{\rm p}
(\gamma^{\rm p},\gamma^{\rm p}_y)=\frac 12S_0\big(\kappa(\gamma^{\rm p})^2+L^2(\gamma^{\rm p}_y)^2\big).
\end{equation*}

\subsection{The traction problem}

In the \emph{traction problem}, the bottom side of the strip is clamped, that is,
\begin{equation}\label{eq:7}
  u(-h,t)=0,
\end{equation}
and a time-dependent shear traction $\widehat\tau_h(t)$ is prescribed on the upper side, that is,
\begin{equation*}
  \tau(h,t)=\widehat\tau_h(t).
\end{equation*}
On recalling that the shear stress is spatially constant by \eqref{eq:68}, we see that the shear stress $\tau(t)$ appearing in the flow rule \eqref{eq:21} is a prescribed, spatially--constant field. Thus, the flow rule \eqref{eq:49} can be solved for the plastic shear $\gamma^{\rm p}$ without knowing the displacement field. The latter is recovered by integrating \eqref{eq:5} and \eqref{eq:6}, and by taking \eqref{eq:7} into account, that is to say,
\begin{equation*}
  u(y,t)= \int_{-h}^y \left(\frac {\tau(t)} G+ \gamma^{\rm p}(s,t)\right){\rm d}s.
\end{equation*}

{\bf{Acknowledgements.}} G.T. thanks Lallit Anand, Lorenzo Bardella, and Patrizio Neff for stimulating discussions on strain-gradient plasticity.

\bibliographystyle{abbrv}
\bibliography{bibliography}

\end{document}